\title{When can splits be drawn in the plane?}
\author{Monika Balvo\v{c}i\={u}t\.{e}\thanks{Email: monikab@maths.otago.ac.nz}, \hspace{0.3cm}David Bryant\thanks{Email: david.bryant@otago.ac.nz; corresponding author}\\
{\small Department of Mathematics \& Statistics}\\ {\small University of Otago, P.O. Box 56, Dunedin 9054}\\ {\small New Zealand}\\~\\
Andreas Spillner\thanks{Email: mail@andreas-spillner.de}\\
{\small Department of Mathematics and Computer Science}\\ {\small University of Grieswald}}
\newtheorem*{definition}{Definition}
\newtheorem*{theorem}{Theorem}
\newtheorem*{lemma}{Lemma}
\newcommand\redout{\bgroup\markoverwith
	{\textcolor{red}{\rule[.5ex]{2pt}{0.4pt}}}\ULon}
\begin{document}

\maketitle



\begin{abstract}
	Split networks are a popular tool for the analysis and visualization of complex evolutionary histories. Every collection of splits (bipartitions) of a finite set can be represented by a split network. Here we characterize which collection of splits can be represented using a {\em planar} split network. Our main theorem links these collections of splits with oriented matroids and arrangements of lines separating points in the plane. As a consequence of our main theorem, we establish a particularly simple characterization of {\em maximal} collections of these splits. 
\end{abstract}
	

\section{Introduction}

A phylogenetic tree is an elegant and compelling model for evolutionary history, except when it isn't. Phenomena such as horizontal gene transfer, recombination, hybridization and incomplete lineage sorting all lead to evolutionary histories which cannot be represented using a single tree. Even when the underlying evolutionary history {\em is} tree-like, it can be difficult to faithfully represent uncertainty in inference using a single tree \citep{HusonBryant06}.

For these reasons, {\em phylogenetic networks} became a widely-used tool for the study of complex evolutionary histories. {\em Explicit methods}, as classified  by \cite{Huson10}, augment phylogenetic trees with extra branches and nodes to represent transfers and reticulations. {\em Implicit methods} do not attempt to reconstruct detailed reticulation histories, but instead provide an abstract portrayal of phylogenetic signals. Explicit methods aim for realism, with bells, pistons, whistles and funnels. Implicit methods aim to extract information from the data, so are analogous to signal analysis or spectral methods.

Implicit methods have proven to be the most popular in practice, at least as far as citations go. Many of the most widely used implicit methods are based on {\em split networks}. We give a formal definition of split networks below; see Fig.~\ref{fig:typesOfNetworks} for examples. See \cite{HusonBryant06} and the comprehensive monograph of  \cite{Huson10} for more introductory material on these networks. Split networks also have close links with the theory of {\em media 
graphs}, as developed independently by  \cite{Eppstein08}.

\begin{figure}
	\centering
	\includegraphics[]{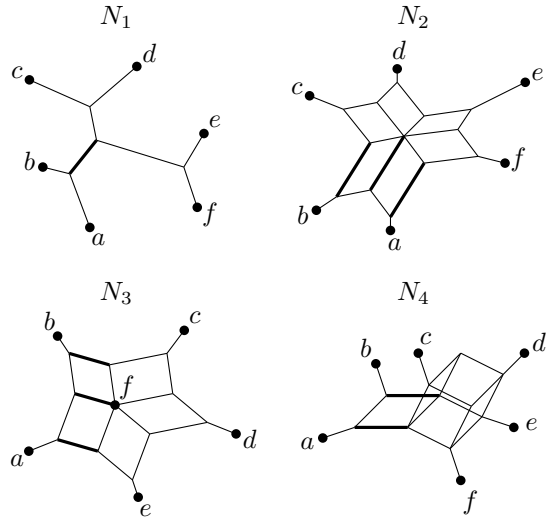}
	\caption[Examples of split networks]{Examples of split networks. 	 The class of bold edges in each correspond to the split $ab|cdef$.}
	\label{fig:typesOfNetworks}
\end{figure}

An unrooted phylogenetic tree (Fig.~\ref{fig:typesOfNetworks} $N_1$) is the simplest type of split network. Removing any edge divides the tree into two connected parts, thereby inducing a split (bipartition) on the set of leaf labels. Each edge induces a different split, and the tree can be reconstructed from these splits. 

The edges in the tree also have lengths. The tree encodes a metric on the set of leaves whereby the distance between two leaves equals the length of the path connecting them. 

In this way, phylogenetic tree represents both clustering (splits) and proximity (distance) information. However only very special collections of splits, and only very special metrics, can be represented faithfully on a tree. That is an advantage for inference if trees are appropriate representations. It is a disadvantage of they are not. The rationale behind split networks is to extend the phylogenetic trees to handle more general situations.

Networks $N_2$, $N_3$, $N_4$ in Fig.~\ref{fig:typesOfNetworks} are examples of split networks which are {\em not} unrooted phylogenetic trees. A split here corresponds to a set of edges rather than a single edge. These edges are drawn as parallel lines with the same length. 
It can be shown that every collection of splits of a finite set can be depicted using a split network \citep{Buneman71}, although the representation is not unique \citep{Wetzel95,HusonBryant06}. 

Edges in these networks have lengths. The distance between two leaves is defined as the length of the shortest path between these two leaves. The underlying graph has a property that any two shortest paths between the same two vertices will cross edges corresponding to the same set of splits. Hence the distances in the network are completely determined by the set of splits and the edge length associated with each split. This is important when it comes to setting up statistical models.

While every collection of splits can be represented by a split network, not all of these networks can be drawn usefully in the plane. The networks $N_1$, $N_2$, $N_3$ in Fig.~\ref{fig:typesOfNetworks} are all planar, but network $N_4$ is not. The principle aim of this paper is to characterize exactly when a collection of splits has a planar split network representation. Our main result connects split networks with arrangements of lines and oriented matroids. Network $N_4$ is not planar and we will be able to show that no split network representation for the same set of splits is planar. 

In the following section we give formal definitions of split networks and the other constructions, concluding with the statement of our main theorem (Theorem~\ref{thm:main} below). Section~3 gives a proof of the main theorem, while the last section examines a special maximal case which turns out to be far simpler to deal with. We finish the with some open problems.

 \section{Preliminaries}

A \emph{split} $A|B$ is a bipartition of $X$ and a \emph{split system} is a collection of splits of the same set. A split $A|B$ is \emph{proper} if both $A$ and $B$ are non-empty.

\subsection{Flat splits}\label{sec:flat}
Let $\mathcal{A}$ be a collection of lines in the plane, and let $X$ be a set of points in the plane not lying on any of these lines (Fig.~\ref{fig:flat}a). Each line $\ell \in \mathcal{A}$ partitions the plane, and therefore $X$, into at most two parts. The collection of \emph{splits} (bipartitions) of $X$ determined by the lines $\mathcal{A}$ clearly has a great deal of structure, and it is this structure which is of interest.  

\begin{figure}
	\centering
	\includegraphics{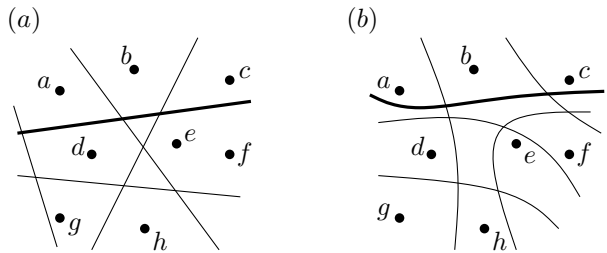}
	\caption[Flat splits]{
		A set of points $X$ and an arrangement of (a) lines and (b) pseudolines $\mathcal{A}$. None of the points in $X$ lie on any pseudoline in $\mathcal{A}$. Each $\ell \in \mathcal{A}$ divides $X$ into at most two parts inducing a split on $X$. The line in bold induces the split $abc|defgh$.}
	\label{fig:flat}
\end{figure}

Actually we consider a slightly more general situation by allowing wobbly lines. A \emph{pseudoline} in the plane is any curve homeomorphic to a line. A \emph{pseudoline arrangement} is a finite collection of pseudolines with the property that each pair of pseudolines intersect in exactly one point, and when they do intersect, they cross. A \emph{weak} arrangement of pseudolines is defined in the same way, except not every pair of pseudolines needs to intersect. Let $X$ be a set of points in the plane not lying on any pseudoline in $\mathcal{A}$ (Fig.~\ref{fig:flat}b). Each pseudoline $\ell \in \mathcal{A}$ divides $X$ into at most two parts and, as in the straight line case, induces a split of $X$.

\begin{definition}\label{defn:flatSS}\citep{BryantDress07,Spillner12} \label{def:flat}
	A split system $\mathcal{S}$ is \emph{flat} if it is induced by an arrangement of pseudolines in the plane. We say that $\mathcal{S}$ is \emph{affine} if it is induced by a collection of straight lines.
\end{definition}

Configurations of lines, pseudolines, and points arise in a wide variety of contexts, particularly in classification \citep{hastie09}, oriented matroids \citep{matroidsbook} and statistical learning theory \citep{hastie09}. Our original interest in these structures followed from applications in evolutionary biology.

\subsection{Network splits}\label{sec:net}

Let $T$ be a {\em phylogenetic tree}, a labeled tree representing the evolutionary history for a set $X$ of species or individuals and with no unlabelled vertices of degree less than three. Let $\varphi$ be the label map from  taxa in $X$ to  vertices of the tree $T$, and let $e$ be an edge of the tree. Then $T \setminus e$ consists of two connected components with corresponding vertex sets $V_1$ and $V_2$. If $A = \varphi^{-1}(V_1)$ and $B =\varphi^{-1}(V_2)$ then $A|B$ is a split that corresponds to the edge $e$ in the tree $T$. 

For many kinds of data, the signal in the set of inferred splits is more complex than can be represented by a single phylogenetic tree, motivating interest in {\em split networks}. 
%
%
%
%
%
%
%
For these networks, the underlying graph is a \emph{partial cube}, that is, an isometric subgraph of a hypercube \citep{Djokovic73}, with some vertices labeled by elements in $X$.
Partial cubes have a rich mathematical structure. It can be shown that the edge set of a partial cube can be partitioned into classes such that (a) any shortest path contains at most one edge from each class, and (b) if a shortest path between two points contains an edge in some class then so does every path between those points. Removing all edges in a single class breaks the graph into two connected components (see Fig.~\ref{fig:planar}b). Each \emph{edge class} $\varepsilon_i \in \varSigma$ induces a split $S_i \in \mathcal{S}$ of the label set $X$. Splits arising in this way are said to be \emph{induced splits} of the network $\mathcal{N}$. 


A \emph{drawing} of a split network is a straight line embedding of the graph into the plane so that edges in the same class are parallel and have the same length. A split network is \emph{planar} if it has a drawing such that edges only intersect at their endpoints, each internal cell is strictly convex, and the external face contains at least one edge from each class (Fig.~\ref{fig:planar}a). 

\begin{definition} \label{def:planar} 
	A collection of splits $\mathcal{S}$ has a \emph{planar split network representation} if there exists a planar split network, partially labelled by $X$, which induces all the splits in $\mathcal{S}$.
\end{definition}

\begin{figure}
	\centering
	\includegraphics[]{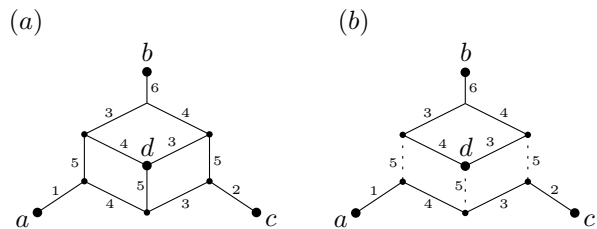}
	\caption[An example of a planar split network]{
		(a) An example of a planar split network. (b) If we remove edges of the split number five	$ac|bd$, we get two connected components: one contains vertices labeled with $\{a,c\}$ and the other with $\{b,d\}$}
	\label{fig:planar}
\end{figure}

\subsection{Oriented matroid splits}\label{sec:om}

The third type of split collection we consider arises from oriented matroid theory. An \emph{oriented matroid} is an abstract structure which mathematically can be used to represent point configurations over the reals, real hyperplane arrangements, convex polytopes and directed graphs \citep{matroidsbook}. Oriented matroids, like standard matroids, have a wide variety of different formulations and characterizations in various axiom systems. Excellent introductions to oriented matroids are found in \cite{matroidsbook} and \cite{CGbookOM}. 

Let $E$ be a finite set. A \emph{signed vector} $Y$ is a map from $E$ to $\{+,-,0\}$. Let $Y^+$, $Y^0$ and $Y^-$ denote the positive, zero, and negative indices of the sign vector $Y$. Let $Y_+$ denote the vector with $Y^+ = E$ and $Y_-$ the vector with $Y^- = E$.

For any two signed vectors $V,U$ we define their composition $V \circ U$ by
\[
V \circ U (x) = \begin{cases} V(x) & \mbox{ if $V(x) \neq 0$}\\ U(x) & \mbox{ otherwise.} \end{cases}.
\]
We say that $V$ is a \emph{restriction} of $U$ if $V(x) \neq 0$ implies $V(x) = U(x)$. The \emph{support} of $U$ is the set of elements $x \in X$ such that $U(x) \neq 0$. 

	A collection $\mathscr{T}$ of sign vectors constitute the set of topes of an oriented matroid if it satisfies the following tope axioms \citep{Handa90} :
	\begin{description}
		\item[(T0)] $\mathscr{T} \neq \emptyset$
		\item[(T1)] $T \in \mathscr{T} \Rightarrow -T \in \mathscr{T}$
		\item[(T2)] If $V$ is a restriction of some $T \in \mathscr{T}$ then either there is a $T' \in \mathscr{T}$ with $V \circ T' \in \mathscr{T}$ and $V \circ (-T') \not \in \mathscr{T}$, or $V \circ T' \in \mathscr{T}$ for every $T' \in \mathscr{T}$.
	\end{description}

Starting with the topes we can obtain other formulations of an oriented matroid. The maps $Y$ such that $Y \circ T' \in \mathscr{T}$ for every $T' \in \mathscr{T}$ are called the \emph{covectors} of the oriented matroid. The covectors with minimal support are called co-circuits. Axiom systems have been proven for sets of covectors  \citep[7.2.1]{CGbookOM} and sets of cocircuits \citep[Theorem~6.2.1]{CGbookOM}. 

The \emph{rank} $r$ of the oriented matroid is the cardinality of the smallest subset $A \subseteq X$ which intersects the support of every cocircuit. An oriented matroid is {\em uniform} if all of its cocircuits have exactly $r-1$ zero elements, i.e., $|Y^0| = r - 1$ for all $Y \in \mathcal{C}^*$ \citep{matroidsbook}. An oriented matroid is {\em acyclic} if it contains the positive tope $T_+ \in \mathscr{T}$. We will assume that the oriented matroids we work with are {\em loop-free}, meaning that every tope has support equal to the complete set of elements.

\begin{definition} \label{def:om}  A set of splits $\mathcal{S}$ is \emph{encoded by an oriented matroid of rank $3$} if there exists a set of topes $\mathscr{T}$ on $X$ of a loop-free, acyclic, rank 3 oriented matroid such that for all $A|B \in \mathcal{S}$ there is $T \in \mathscr{T}$ such that $A = T^{+}$ and $B = T^{-}$.
\end{definition}

\cite{BryantDress07} have already briefly discussed collections of splits satisfying Definition~\ref{def:om}, calling them \emph{pseudo-affine}. Later \cite{Spillner12} introduced the term \emph{flat split systems} and defined them in terms of sequences of permutations.

As an example, consider the set of signed vectors 
\begin{align*}
\mathcal{T} = \{&----, ++++, ---+, +++-, --+-, ++-+, --++,\\ &++--, +-+-, -+-+, +-++, -+--, +--+, -++- \}.
\end{align*}
Here each vector denotes a map from $a,b,c,d$ to $\{+,-\}$. It can be checked that $\mathcal{T}$ does satisfy the axioms (T0), (T1) and (T2). The corresponding collection of splits is 
\[\mathcal{S} = \{abc|d,abd|c,ab|cd,ac|bd,acd|b,ad|bc\}.\]
Note that every split corresponds to a tope and its negation, and there is no split corresponding to the positive or negative topes $T_+$ and $T_{-}$ since we assume that both sides of a split are non-empty.

\subsection{Main theorem}

Our main result is that the collections of splits just introduced are all equivalent.

\begin{theorem} \label{thm:main}
	Let $\mathcal{S}$ be a collection of splits of a finite set $X$. The following are equivalent:
	\begin{enumerate}
		\item $\mathcal{S}$ is flat, that is, $\mathcal{S}$ can be represented by an arrangement of pseudolines in the plane;
		\item $\mathcal{S}$ has a planar split network representation;
		\item $\mathcal{S}$ is encoded by a loop-free, acyclic oriented matroid of rank 3.
	\end{enumerate}
\end{theorem}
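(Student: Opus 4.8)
The plan is to establish the two equivalences $(1)\Leftrightarrow(3)$ and $(1)\Leftrightarrow(2)$, which together yield the theorem. The first of these is, at bottom, the Folkman--Lawrence topological representation theorem combined with point--pseudoline duality. Given a flat system, the points $X$ together with the arrangement $\mathcal{A}$ constitute a (possibly non-stretchable) pseudo-configuration of points whose oriented matroid $M$ has rank $3$; by the defining property of such a configuration, a bipartition $A|B$ of $X$ is separated by a pseudoline exactly when the sign vector that is $+$ on $A$ and $-$ on $B$ is a tope of $M$. Hence every split induced by a pseudoline of $\mathcal{A}$ is a tope of $M$, giving $(1)\Rightarrow(3)$. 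Conversely, I would realise a given rank-$3$ oriented matroid as a pseudoline arrangement by the topological representation theorem and then pass to the dual arrangement, which presents $M$ as points $X$ with separating pseudolines $\mathcal{A}$; duality preserves the single-crossing property, so the separators form an arrangement and we obtain $(3)\Rightarrow(1)$. The two side conditions must be tracked along the way: loop-freeness matches the requirement that no point of $X$ lie on a pseudoline, so that each split is a full-support tope, while acyclicity --- the presence of the positive tope $T_+$ --- corresponds to all points lying on one side of some pseudoline, that is, to $\mathcal{A}$ being an affine rather than a projective arrangement.

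For $(1)\Rightarrow(2)$ I would pass to the region graph of $\mathcal{A}$: its vertices are the two-dimensional cells of the arrangement, two cells being joined when they share a pseudoline segment, and each label $x\in X$ is attached to the vertex of the cell containing it. This graph is a partial cube whose edge classes are exactly the pseudolines, so its induced splits are precisely $\mathcal{S}$. Dually it is the one-skeleton of the zonotopal tiling of the arrangement: each class becomes a zone of parallel, equal-length edges, every internal cell is a centrally symmetric convex polygon, and, since each pseudoline is unbounded, every class reaches the outer boundary. Because a zonogon tiling of a centrally symmetric polygon can always be drawn with straight edges --- even when the underlying arrangement is not stretchable --- this supplies a drawing meeting all the conditions of Definition~\ref{def:planar}, so $\mathcal{S}$ has a planar split network representation.

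The converse $(2)\Rightarrow(1)$ is where I expect the real work, and I would treat it as the main obstacle. Starting from a planar split network I would reverse the previous construction: for each edge class I draw a curve crossing exactly the edges of that class and separating the drawing into the two parts that define the split, place each point of $X$ in the cell of its labelled vertex, and then argue that these curves form a weak pseudoline arrangement inducing $\mathcal{S}$ (which may be completed to a full arrangement by extending the curves to meet far from $X$). The crux is to show that the defining conditions of a planar split network --- strict convexity of every internal cell together with the appearance of each class on the external face --- force any two of these curves to cross at most once. Intuitively, convexity prevents a class from turning back inside a cell, so a curve cannot re-enter a region it has left, while the external-face condition anchors the two ends of each curve in the unbounded region; together these should exclude the double crossings that a general planar partial cube can exhibit. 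Making this topological argument precise, and thereby certifying that the recovered zones genuinely form a weak pseudoline arrangement, is the step I expect to demand the most care; everything else rests on the classical correspondence between rank-$3$ oriented matroids, pseudoline arrangements, and zonotopal tilings.
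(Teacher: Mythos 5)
Your overall architecture matches the paper's ($(1)\Leftrightarrow(3)$ via point--pseudoline duality, $(1)\Rightarrow(2)$ via zonotopal tilings, $(2)\Rightarrow(1)$ by extracting pseudolines from edge classes), but your $(3)\Rightarrow(1)$ step fails as stated. Passing from the Folkman--Lawrence representation to ``the dual arrangement'' means taking an adjoint, and the adjoint exchanges \emph{elements} and \emph{cocircuits}: the pseudolines of the dual picture correspond to cocircuits of $\mathcal{M}$, which in rank $3$ vanish on at least two elements, so these pseudolines pass \emph{through} points of $X$ rather than separating them --- and no object in the dual represents a tope at all. Since the splits of $\mathcal{S}$ are encoded by topes, the dual arrangement simply does not induce $\mathcal{S}$. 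The missing idea is the paper's augmentation step: for each split $S$ choose a point $p_S$ in the interior of a cell of the Type I representation whose tope induces $S$, add at least two new pseudolines through each $p_S$, oriented toward a point in the positive cell so the extension stays acyclic; in the resulting matroid $\mathcal{M}_T$ each chosen tope is the restriction of a cocircuit, and only then does the Type II (adjoint) representation yield one pseudoline per split, after which the extraneous pseudolines are deleted. Two further points in this equivalence are glossed over. In $(1)\Rightarrow(3)$ the points of $X$ and the arrangement $\mathcal{A}$ do \emph{not} form a pseudoconfiguration of points as given (the pseudolines avoid $X$); one must first apply Levi's enlargement lemma to put at least two pseudolines through every point of $X$, so that each point becomes a cocircuit and hence an element of the adjoint, through which each $\ell\in\mathcal{A}$ induces a cocircuit restricting to the desired tope. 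And adjoints of non-realizable oriented matroids exist only by a rank-$3$-specific theorem of \citet{Goodman80}; ``duality'' is not available off the shelf, which is precisely why the paper routes everything through Type I/Type II representations.

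In $(2)\Rightarrow(1)$ you correctly identify the at-most-one-crossing claim as the crux (the paper proves it by shearing so that class $\mathcal{E}_i$ is horizontal and $\mathcal{E}_j$ vertical, whence $\ell_j$ crosses $\ell_i$ only from left to right and so at most once), and your preparatory step implicitly needs the fact --- via a lemma of \citet{Klavzar02} together with strict convexity --- that each internal face contains exactly $0$ or $2$ edges of each class, so that the midpoint graphs are simple paths ending on the external face. But your completion step, ``extending the curves to meet far from $X$,'' is not routine: forcing one non-crossing pair to cross can create crossings with, or double crossings of, other curves. The paper devotes a separate inductive argument to this: enclose all crossing points in a simple closed curve $C$ and show that among the non-crossing pairs there is always one whose endpoints are adjacent on $C$, which can then be merged locally without affecting any other pseudoline; some such argument is genuinely needed to turn your weak arrangement into an arrangement. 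Finally, in $(1)\Rightarrow(2)$ your unproven assertion that a zonogon tiling can always be drawn with straight edges, even for non-stretchable arrangements, is exactly the Bohne--Dress theorem \citep{Bohne92,RichterZiegler94}; the paper invokes it after first adding an auxiliary sweep pseudoline $g$ with all of $\mathcal{A}\cup X$ on one side, whose intersection order with the arrangement fixes the edge vectors $\vec{v}_1,\ldots,\vec{v}_n$ of the projection --- a set-up step your sketch omits but which is needed to define the drawing at all.
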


The correspondence between flat split collections and collections from oriented matroids might appear to be a straight-forward application of the celebrated Topological Representation Theorem of \cite{Folkman78}. However the representation we give is slightly different. Traditionally, the pseudolines in an arrangement correspond to the elements and the cells correspond to topes. The representation we describe has pseudolines corresponding to topes and points, lying in the cells, corresponding to elements. 

Split networks have been studied less than oriented matroids, though methods for constructing split networks have been cited thousands of times. A connection between some classes of planar split networks and line arrangements was established by \cite{Wetzel95}. He considered affine collections of splits where the set $X$ of points formed the vertices of a convex polygon. These collections are called \emph{circular}, and can be characterized by the existence of an ordering $x_1,x_2,...,x_n$ of $X$ with the property that every split in $\mathcal{S}$ has the form
\[
\{x_i,....x_{j-1}\}|X - \{x_i,...,x_{j-1}\}
\]
for some $i<j$. Later, \cite{DressHuson04} used De-Bruijn duality to prove that a collection of splits is circular if and only if it has a planar split network representation where the vertices labeled by $X$ all lie on the external face (they are \emph{planar outer labeled}, Fig.~\ref{fig:typesOfNetworks} $N_2$). Neighbor-Net \citep{BryantMoulton04} and Q-Net \citep{QNET} use this fact to produce planar split network representations of distance or quartet data.
%

There are many applications where it makes sense to construct split networks where some of the \emph{internal} vertices are also allowed to be labeled. These vertices might represent ancestral species, or spatially distributed samples. Therefore it is natural to characterize which collections of splits may be represented in this way, circular collections being a special case. 

\cite{Spillner12} made significant progress in that direction. They started with the concept of  \emph{simple allowable sequences} of permutations, as introduced by \cite{GoodmanPollack80, GoodmanPollack82}, and showed that collections of splits generated from these sequences could be represented using a planar split network. The authors stated that these split collections were equivalent to those derived from pseudoline arrangements or oriented matroids, but did not provide a proof. \cite{FLATNJ} proposed a method for computing split networks which are flat, but not necessarily circular.

\section{Proof of the main theorem}

The first step is to prove the equivalence of flat split systems and split systems from oriented matroids. 

\begin{lemma}\label{lemma:1a}
	Let $\mathcal{S}$ be a split system encoded by a loop-free, acyclic, rank 3 oriented matroid with element set $X$. Then $\mathcal{S}$ is a flat split system on $X$.
\end{lemma}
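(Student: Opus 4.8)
The plan is to produce a pseudoline arrangement whose induced splits are exactly those of $\mathcal{S}$, using the \emph{dual} form of the topological representation flagged in the introduction: elements of the oriented matroid $M$ will become \emph{points} and topes will become \emph{pseudolines}. To fix intuition, consider first the realizable case, where $M$ is the oriented matroid of a vector configuration $\{v_x : x \in X\} \subseteq \mathbb{R}^3$. Acyclicity gives a functional $y_0$ with $\langle v_x, y_0 \rangle > 0$ for all $x$, so after a rotation every $v_x$ has positive third coordinate and may be rescaled to lie in the affine plane $\{z_3 = 1\}$; this yields genuine points $p_x \in \mathbb{R}^2$. Each tope $T$ is realized by a generic functional $y$, and the line $m_T = \{z : \langle y, z\rangle = 0\}$ meets this affine plane in a straight line having $p_x$ on its positive side exactly when $T(x) = +$. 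Hence $m_T$ induces the split $T^+ \mid T^-$, and loop-freeness ($T(x) \ne 0$ for all $x$) guarantees that no $p_x$ lies on $m_T$. Taking one such line per split in $\mathcal{S}$ produces a line arrangement inducing precisely $\mathcal{S}$.

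First I would carry this construction over to an arbitrary (not necessarily realizable) $M$ by invoking the Topological Representation Theorem of \cite{Folkman78}: every loop-free rank~$3$ oriented matroid is represented by an arrangement of pseudocircles on $S^2$ (pseudolines in $\mathbb{RP}^2$), with elements corresponding to pseudocircles and topes to the $2$-cells. Dualizing this arrangement --- equivalently, reading $M$ as a \emph{generalized configuration of points} in the sense of \cite{GoodmanPollack80, GoodmanPollack82} --- replaces the element-pseudocircles by genuine points $p_x$ in the plane and the tope-cells by positions of the sweeping directed pseudoline of the associated allowable sequence. Acyclicity again selects the affine chart in which the positive tope is the distinguished cell, so that the points are pointed and the sweep is well defined. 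In this picture each tope $T$, and in particular each tope coming from a split $A \mid B \in \mathcal{S}$, appears as a position of the sweeping pseudoline, giving a pseudoline $m_{A\mid B}$ with the points of $A$ on one side and those of $B$ on the other.

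To finish I would set $\mathcal{A} = \{ m_{A\mid B} : A\mid B \in \mathcal{S} \}$ and check that the split system it induces is exactly $\mathcal{S}$. Loop-freeness forces every tope to have full support, so each $p_x$ lies strictly off every $m_{A\mid B}$ and the split that $m_{A\mid B}$ induces on $X$ is precisely $A \mid B$; conversely every pseudoline of $\mathcal{A}$ induces a member of $\mathcal{S}$, so nothing extraneous is created. The only bookkeeping is to ensure $\mathcal{A}$ is a genuine pseudoline arrangement: any two of the $m_{A\mid B}$ may be taken to cross at most once, and the family can be completed to a full arrangement by adjoining pseudolines routed entirely to one side of the point set --- these induce only the improper split $X \mid \emptyset$, which we discard --- using a standard extension argument for pseudoline arrangements (cf.\ Levi's enlargement lemma).

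The main obstacle is the topological dualization itself. In the realizable case the passage ``vectors $\to$ points, functionals $\to$ separating lines'' is elementary linear algebra, but for a non-realizable $M$ one cannot separate the points by straight lines, and the content of the argument is exactly that the Topological Representation Theorem together with the theory of allowable sequences supplies a \emph{mutually consistent} family of separating pseudolines --- i.e.\ that the dual object is again a bona fide pseudoline arrangement over genuine points. Verifying that this dual construction preserves the correspondence ``tope $\leftrightarrow$ separating pseudoline'' on the nose, and that the points may be taken genuine while only the lines remain pseudo, is where the real work lies; the roles of acyclicity (to obtain the affine, pointed chart) and of loop-freeness (to keep the points off the lines) must be tracked throughout.
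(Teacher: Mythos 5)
Your high-level strategy---dualize so that elements of $\mathcal{M}$ become points and topes become separating pseudolines---is the same as the paper's, and your realizable warm-up is correct. But in the general case there is a genuine gap, at exactly the spot you flag as ``where the real work lies.'' The dualization you appeal to is the existence of an adjoint for rank-3 oriented matroids (\cite{Goodman80}; this is what underlies the Goodman--Pollack generalized configurations of points): it converts the \emph{cocircuits} of $\mathcal{M}$ into pseudolines and the elements into points, but the \emph{topes} of $\mathcal{M}$ have no image whatsoever under this correspondence. Consequently your assertion that each tope ``appears as a position of the sweeping directed pseudoline of the associated allowable sequence'' is unsupported---it is essentially equivalent to the statement being proved, and is precisely the equivalence that \cite{Spillner12} asserted without proof, as the paper notes. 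Invoking the Topological Representation Theorem together with allowable sequences does not by itself supply a pseudoline, compatible with the dual picture, realizing each prescribed tope.

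The device that closes this gap in the paper is an augmentation step performed \emph{before} dualizing. In a type I representation of $\mathcal{M}$ (elements as pseudolines, topes as open cells), choose for each split $S \in \mathcal{S}$ a point $p_S$ in the interior of the cell of a tope $T_S$ inducing $S$, and pass at least two new pseudolines through each $p_S$, all oriented towards a point in the positive cell so that the extended oriented matroid $\mathcal{M}_T$ remains loop-free, acyclic and of rank 3. Each $p_S$ is now an intersection point of the enlarged arrangement, hence a cocircuit of $\mathcal{M}_T$, and its restriction to $X$ is exactly $T_S$. Applying the adjoint to $\mathcal{M}_T$ (the type II representation) then turns these cocircuits into pseudolines whose signs on the element-points agree with $T_S$, and deleting all pseudolines not corresponding to chosen topes leaves an arrangement inducing exactly $\mathcal{S}$. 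A secondary, repairable flaw: your closing bookkeeping does not work as stated, since adjoining extra pseudolines routed entirely to one side of the point set cannot make two non-crossing lines $m_{A|B}$ and $m_{A'|B'}$ intersect each other; in the paper's construction this issue never arises, because the type II representation lives in the projective plane, where every pair of pseudolines already crosses exactly once.
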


Before proving Lemma~\ref{lemma:1a}, we review two different graphical representations which exist for all rank 3 oriented matroids \citep[def. 5.3.4]{matroidsbook}:
\begin{description}
	\item[TYPE I] Arrangement of pseudolines in a projective plane.
	\item[TYPE II] Pseudoconfiguration of points.
\end{description}
See Fig.~\ref{fig:adjoint} for a type I and a type II representation of an oriented matroid $\mathcal{M}$. 
Pseudolines in the type I representation correspond to the elements in $E$ with arrows indicating orientation of each element in $\mathcal{M}$. 
The sign vector of a cell in the type I representation is determined by the orientation of the pseudolines (elements). That is, if some pseudoline is oriented towards some cell, then the corresponding sign in the covector for that cell is `$+$' and `$-$' otherwise; in case a pseudoline passes through a point that corresponds to a covector $C$, the respective sign of the covector is `0'.

Open cells of the arrangement give topes and line intersections (numbered) give cocircuits. For each tope $T \in \mathscr{T}$ at least one of $T$ and $-T$ corresponds to a cell in the type I representation. Note that only topes that correspond to open cells bounding the line at infinity $\ell_\infty$ have their negatives present in the type I representation. A similar condition is valid for the cocircuits, that is, for each cocircuit $C \in \mathcal{C}^*$ either $C$ or $-C$ is present in the type I representation.

In the type II representation of $\mathcal{M}$ the role of elements and cocircuits is reversed, i.e., cocircuits form an arrangement of pseudolines and elements correspond to the labeled points of intersection. In the type II representation each of the cocircuits is assigned an orientation. As with the type I representation, for each cocircuit either $C$ or $-C$ is present in the type II representation. Signs of each cocircuit $C$ that is present in the TYPE II representation are then determined by the relative position of each element $e_i \in E$: 
\begin{itemize}
	\item $C_i = +$ if the point corresponding to the element $e_i$ is on the positive side of the pseudoline $C$, 
	\item $C_i = -$ if the same point is on the negative side, and
	\item $C_i = 0$ if the point is on the pseudoline $C$.
\end{itemize}

\begin{figure}
	\centering
	\includegraphics{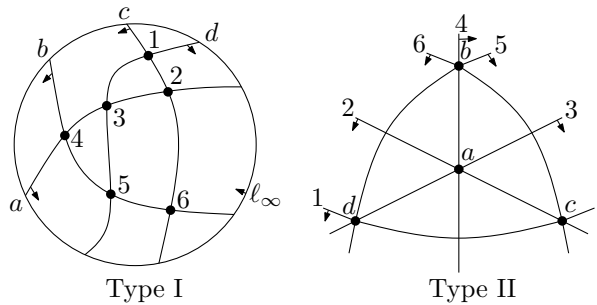}
	\caption[Type I and type II representations of an oriented matroid]{
		Type I and type II representations of an oriented matroid $\mathcal{M}$ with element set $E = \{a,b,c,d\}$ and cocircuits $\mathcal{C}^* = \{1,2,3,4,5,6\}$
	}
	\label{fig:adjoint}
\end{figure}

The existence of a type I representation is a direct result of the \cite{Folkman78} topological representation theorem reformulated for the oriented matroids of rank 3 \citep[Thm. 6.2.3]{matroidsbook}. The existence of the type II representation for the rank 3 follows from the existence of the oriented adjoint \citep[Theorem~5.3.6]{Goodman80, matroidsbook}. 
The map from the type I to the type II representation converts elements of $\mathcal{M}$ as pseudolines to elements as points and cocircuits as points to cocircuits as pseudolines. The map gives no representation for the topes of the original oriented matroid $\mathcal{M}$. Hence, to prove Lemma~\ref{lemma:1a} we augment the oriented matroid $\mathcal{M}$ to a loop-free, acyclic, rank 3 oriented matroid $\mathcal{M}_T$ such that topes of $\mathcal{M}$ are mapped to cocircuits of $\mathcal{M}_T$.

\begin{proof}[Lemma~\ref{lemma:1a}]
Let $\mathcal{S}$ be a split system encoded by a loop-free, acyclic, rank 3 oriented matroid $\mathcal{M}$ on $E = X$ with tope set $\mathscr{T}$. Construct a TYPE I representation $\mathcal{A}$ of $\mathcal{M}$, as defined in \citep[def. 5.3.4]{matroidsbook}. Each oriented pseudoline corresponds to an element and each open cell of the arrangement coresponds to a tope (Fig.~\ref{fig:adjoint}). Without loss of generality assume that one of these cells corresponds to the positive tope $T_+$. Let $\mathscr{T}' = \{T_{S}: S \in \mathcal{S}\}$ be a subset of topes $\mathscr{T}$ of $\mathcal{M}$ such that for each proper split $S = A|B$ there is exactly one tope $T_S \in \mathscr{T}'$ that induces $S$ and corresponds to a cell in $\mathcal{A}$. In case there is a choice between two cells giving topes that induce the same split $S$, we choose one at random. 

Let $P_T = \{p_S: S \in \mathcal{S}\}$ be a set of points such that  $p_{S} \in P_T$ is any point in the interior of the cell corresponding to $T_{S}$. Let $p_+$ be some point in the cell associated with $T_+$ such that $p_+ \notin P_T$. Add a set of oriented pseudolines $\{\ell_P\}$ to $\mathcal{A}$ so that at least two of them passes through each point $p \in P_T$ and all pseudolines are oriented towards $p_+$, so that the oriented matroid remains acyclic (Fig.~\ref{fig:proof1}b). Let $\mathcal{M}_T$ denote the extended oriented matroid which has this type I representation. As any intersection of two or more pseudolines corresponds to a cocircuit of $\mathcal{M}_T$ we get a bijection from $P_T$ to a subset $\mathcal{C}^*_P \subseteq \mathcal{C}^*_T$ of cocircuits of $\mathcal{M}_T$.

We now have a bijective map $\phi$ from topes in $\mathscr{T}'$ to cocircuits $\mathcal{C}^*_P$ and each tope $T \in \mathscr{T}'$ is the restriction of $\phi(T) \in \mathcal{C}^*_P$ to $X$. Take a TYPE II representation of $\mathcal{M}_T$, as defined in \citep[def. 5.3.4]{matroidsbook}. Now each cocircuit of $\mathcal{M}_T$ corresponds to a pseudoline, and each element maps to an intersection point. Each split of $\mathcal{S}$ corresponds to a tope in $\mathcal{M}$, to a point in $P_T$, to a cocircuit of $\mathcal{M}_T$ and hence to a pseudoline in the representation. Remove all pseudolines that do not correspond to topes to obtain a set of pseudolines which induces the original set of splits $\mathcal{S}$.

\end{proof}

\begin{figure}
	\centering
	\includegraphics[]{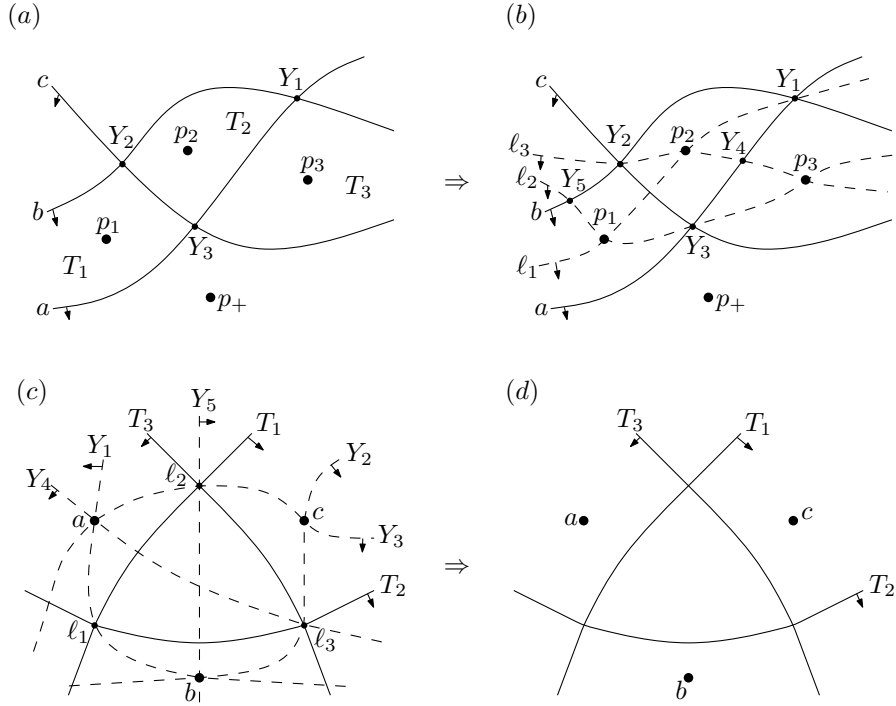}
	\caption[Correspondence between oriented matroid splits and flat splits ]{
		(a) An oriented matroid $\mathcal{M}$ on a set of elements $E = \{a,b,c\}$ with cocircuits $\mathcal{C}^* = \{Y_1, Y_2, Y_3\}$ inducing a split system $\mathcal{S} = \{a|bc, b|ac, c|ab\}$, then $\mathscr{T}' = \{T_1, T_2, T_3\}$. 
		(b) En extended oriented matroid $\mathcal{M}_T$ on $E_T = E \cup \{\ell_1, \ell_2, \ell_3\}$ with $\mathcal{C}^*_T = \mathcal{C}^* \cup \{Y4, Y5\} \cup \mathcal{C}^*_P$ where $\mathscr{T}' \rightarrow \mathcal{C}^*_P$. 
		(c) A TYPE II representation  of $\mathcal{M}_T$. 
		(d) $\mathcal{M}_T^{ad}$ restricted to $\mathcal{C}^*_P$ and with points that are labeled with elements in $E$, i.e. a flat split system induced by the splits of the oriented matroid $\mathcal{M}$}
	\label{fig:proof1}
\end{figure}

We now prove the converse of Lemma~\ref{lemma:1a}, using similar machinery.

\begin{lemma}\label{lemma:1b}
	Let $\mathcal{S}$ be a flat split system on $X$. Then $\mathcal{S}$ is encoded by a loop-free, acyclic, rank 3 oriented matroid with element set $X$.
\end{lemma}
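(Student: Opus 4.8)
The plan is to run the construction of Lemma~\ref{lemma:1a} in reverse, again passing between the type~I and type~II representations of a rank~3 oriented matroid by means of the oriented adjoint. Let $\mathcal{A}$ be a pseudoline arrangement together with a point set $X$, with no point on any pseudoline, that induces $\mathcal{S}$, and for each proper split $S=A|B\in\mathcal{S}$ let $\ell_S\in\mathcal{A}$ be a pseudoline inducing it.

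The first step is to turn the points of $X$ into cocircuits, reversing the step in Lemma~\ref{lemma:1a} where points were placed in cells. Since each $x\in X$ lies in an open cell of $\mathcal{A}$, I would add, for every $x\in X$, at least two new pseudolines through $x$, chosen in sufficiently general position that the enlarged family $\mathcal{A}^+$ is again a pseudoline arrangement and no pseudoline of $\mathcal{A}^+$ passes through three points of $X$; all pseudolines are oriented consistently towards one fixed cell. By the topological representation theorem, $\mathcal{A}^+$ is a type~I representation of a loop-free, rank~3 oriented matroid $\mathcal{N}$ whose elements are the pseudolines of $\mathcal{A}^+$, and every $x\in X$, now a crossing of $\mathcal{A}^+$, corresponds to a cocircuit of $\mathcal{N}$. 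This gives an inclusion $X\subseteq\mathcal{C}^*(\mathcal{N})$.

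Next I would pass to the oriented adjoint $\mathcal{N}^{ad}$, which exists for rank~3 \citep[Theorem~5.3.6]{matroidsbook} and has element set $\mathcal{C}^*(\mathcal{N})$, and set $\mathcal{M}:=\mathcal{N}^{ad}|_X$, the restriction obtained by deleting every element outside $X$. Under the adjoint the elements of $\mathcal{N}$ reappear as cocircuits of $\mathcal{N}^{ad}$, and the defining feature of this correspondence is that it preserves the relative position of points and pseudolines. Hence, for each $S=A|B\in\mathcal{S}$, the element $\ell_S$ of $\mathcal{N}$ yields a cocircuit $C_S$ of $\mathcal{N}^{ad}$ whose sign at an element $x\in X$ equals the side of $\ell_S$ on which $x$ lies. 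As no point of $X$ meets $\ell_S$, the covector $C_S|_X$ of $\mathcal{M}$ has full support, so it is a tope $T_S$ with $\{T_S^+,T_S^-\}=\{A,B\}$, and every split of $\mathcal{S}$ is realised by a tope of $\mathcal{M}$. It then remains to verify Definition~\ref{def:om}: loop-freeness is immediate because the topes $T_S$ have full support on $X$; rank~$3$ holds because the general-position choice in the first step keeps $X$ off any single pseudoline of $\mathcal{A}^+$ when $|X|\ge 3$; and acyclicity follows because $X$ is a genuine point configuration in the affine plane (the system is pseudo-affine in the sense of \citep{BryantDress07}), so the reference cell provides the positive tope.

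I expect the main obstacle to be the bookkeeping of signs and supports through the two dualities: one must check that the cocircuits of $\mathcal{N}^{ad}$ arising from the original pseudolines $\ell_S$ really do restrict to full-support topes with the correct bipartition $\{A,B\}$, and that deletion down to $X$ neither destroys acyclicity nor drops the rank below~$3$. Aligning the sign conventions of the adjoint with the orientations of the splits --- rather than arguing loosely by analogy with Lemma~\ref{lemma:1a} --- is precisely the point that will require care, together with disposing of the degenerate cases where $X$ is small or collinear.
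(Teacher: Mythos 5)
Your proposal is correct and follows essentially the same route as the paper: make each point of $X$ a crossing by adding pseudolines through it (the paper justifies this step by Levi's enlargement lemma, which is exactly what your ``sufficiently general position'' extension needs), orient everything toward a common cell for acyclicity, pass to the oriented adjoint of the resulting rank-3 oriented matroid, and restrict to the elements $e_x$ so that each original pseudoline's cocircuit becomes a full-support tope realising its split. The only cosmetic differences are that the paper works in the projective plane and cites Levi explicitly, and it leaves the final sign/acyclicity bookkeeping implicit where you flag it as requiring care.
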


\begin{proof}
Suppose that $\mathcal{S}$ is a flat split system, induced by a set of pseudolines $\mathcal{A}$ separating points labeled by $X$ in the plane. By embedding the arrangement in the projective plane we can repeatedly apply Levi's enlargement lemma \cite[Prop. 6.3.4]{Levi26, matroidsbook} until every point labeled by $X$ lies on at least two pseudolines. Orient these lines arbitrarily, making sure that all of them point towards one open cell and let $\mathcal{M}$ be the corresponding loop-free, acyclic, rank 3 oriented matroid. Let $\mathcal{M}^{ad}$ be an adjoint of $\mathcal{M}$, as defined in \cite{matroidsbook}.

Every $x \in X$ corresponds to a cocircuit of $\mathcal{M}$ and therefore an element $e_x$ of $\mathcal{M}^{ad}$. Furthermore, for every signed pseudoline $\ell \in \mathcal{A}$ there is a cocircuit $C_\ell^{ad}$ of $\mathcal{M}^{ad}$ such that 
\[C_{\ell}^{ad}(e_x) = \ell(x).\]
We now restrict  $\mathcal{M}^{ad}$ to elements $\{e_x:x \in X\}$ so that for each $\ell \in \mathcal{A}$ the cocircuit $C_\ell^{ad}$ restricts to a tope $T$ which induces the same split of $\{e_x:x \in X\}$ as $\ell$ does of the points labeled by $X$.

\end{proof}

We now prove the equivalence between flat splits and collections which can be represented using a planar split network. Going from flat splits to partial cubes is straight-forward: the dual of a  pseudoline arrangement is a partial cube. What is more difficult is demonstrating that this graph has a straight-line embedding in the plane where edges in the same class have the same length and are parallel. For this we apply the celebrated \emph{Bohne-Dress} theorem \citep{Bohne92}, which links zonotopal tilings and oriented matroids. Our presentation draws heavily on \cite{RichterZiegler94}.

\begin{lemma}\label{lemma:g}
	Let $\mathcal{A}$ be an arrangement of pseudolines and $X$ a set of points in the plane. Then $\mathcal{A}$ can be extended by a line $g$ such that all points in $\mathcal{A} \cup X$ are on the same side of $g$.
\end{lemma}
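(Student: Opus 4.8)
## Proof Proposal for Lemma~\ref{lemma:g}

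\noindent
The plan is to exploit the fact that a pseudoline arrangement in the affine plane, together with finitely many points, is a bounded combinatorial object: all the ``action'' (intersection points of pseudolines, and the points of $X$) is confined to some bounded region, so a line placed far enough out in an appropriate direction will leave everything on one side. The cleanest way to make this precise is to pass to the projective plane, where a pseudoline arrangement is naturally completed by the line at infinity $\ell_\infty$, and then to argue that we can choose a new line $g$ parallel to—but strictly closer to infinity than—all the finite data.

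\medskip
\noindent
First I would collect the finitely many distinguished points we must avoid: the points of $X$, together with all pairwise intersection points of the pseudolines in $\mathcal{A}$. Since $\mathcal{A}$ is finite, there are only finitely many such intersections, so this is a finite point set $P$. Second, I would observe that each pseudoline $\ell \in \mathcal{A}$, being homeomorphic to a line and crossing every other pseudoline exactly once, is eventually monotone/unbounded in two opposite directions; in the projective model each $\ell$ meets $\ell_\infty$ in a pair of antipodal ideal points. The key geometric idea is then to choose a direction (equivalently, a point on $\ell_\infty$) that is \emph{not} one of the finitely many ideal directions determined by the pseudolines, and to take $g$ to be a straight line in that direction pushed out past the bounded region containing $P$.

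\medskip
\noindent
Concretely, the third step is to fix coordinates so that $P$ lies in a bounded disk $D$, pick a direction $v$ transverse to every pseudoline's asymptotic behaviour, and set $g$ to be a straight line orthogonal to $v$ placed beyond $D$ in the $+v$ direction, far enough out that $D$ (hence all of $P$, hence all of $X$ and all intersection points of $\mathcal{A}$) lies entirely on the near side. Because $g$ is a genuine straight line and its direction was chosen to differ from the asymptotic directions of the pseudolines, each $\ell \in \mathcal{A}$ crosses $g$ transversally in exactly one point (for $g$ far enough out, the ``return'' branch of each unbounded pseudoline has already been pushed to the far side or meets $g$ only once), so $g$ is compatible with being added to $\mathcal{A}$ as a new pseudoline meeting each existing one exactly once. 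This simultaneously guarantees the separation property—$\mathcal{A} \cup X$ on one side—and that $\mathcal{A} \cup \{g\}$ is still a pseudoline arrangement.

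\medskip
\noindent
The step I expect to be the main obstacle is the third one: verifying that a single straight cut $g$ can be placed so that \emph{every} pseudoline is caught with all of its relevant crossings and all of $X$ on one side, \emph{and} that $g$ meets each pseudoline in exactly one point so the enlarged family is again a legitimate arrangement. The subtlety is that a pseudoline can wobble arbitrarily within the bounded region, and although its two ends run off to infinity in two opposite directions, one must ensure that choosing $g$'s direction away from those finitely many asymptotic directions really does force a single transversal crossing once $g$ is sufficiently far out. I would handle this by appealing to the projective completion: in the projective plane the union $\mathcal{A} \cup \{\ell_\infty\}$ is an arrangement, and ``pushing $g$ out in a generic direction'' amounts to taking $g$ to be a small perturbation of $\ell_\infty$ on one side, sweeping across the region between $\ell_\infty$ and $D$; since no point of $P$ and no non-trivial crossing lies in that sweep region, $g$ inherits the one-crossing property from $\ell_\infty$ and leaves all of $\mathcal{A} \cup X$ on the interior side.
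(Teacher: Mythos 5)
You have identified the right obstacle but not overcome it. The fatal step is your claim that each pseudoline, ``being homeomorphic to a line and crossing every other pseudoline exactly once, is eventually monotone'' and ``meets $\ell_\infty$ in a pair of antipodal ideal points.'' Nothing in the definition forces this: a pseudoline is merely a properly embedded curve homeomorphic to a line, and pairwise single crossings with finitely many other curves impose no constraint whatsoever on its behaviour at infinity. An end can oscillate in direction forever (e.g.\ the graph of $y = x\sin\bigl(\ln(1+x^2)\bigr)$), so its limit set on $\ell_\infty$ is a whole arc rather than a point; then $\mathcal{A}\cup\{\ell_\infty\}$ is \emph{not} an arrangement in the projective plane, and your fallback of taking $g$ as a small perturbation of $\ell_\infty$ has nothing to start from. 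Worse, a straight $g$ can be outright impossible: take a single ``double spiral'' pseudoline whose two ends both wind outward around the origin infinitely often (two interleaved logarithmic spirals joined in the middle; this is properly embedded, separates the plane, and is a legitimate one-element arrangement). Every straight line in the plane meets this curve infinitely many times, so no straight line in any direction, however far out, crosses it exactly once. Thus the ``finitely many asymptotic directions'' you propose to dodge simply need not exist, and the straight-line strategy cannot be repaired by choosing $g$ more cleverly.

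The paper sidesteps this by never asking $g$ to be straight: it invokes the sweeping lemma (Felsner--Weil, Lemma~1; Snoeyink--Hershberger, Theorem~3.1) to produce a \emph{curved} pseudoline $g$ with all intersection points of $\mathcal{A}$ on one side --- this citation is precisely the nontrivial ``tameness at infinity'' input your outline is missing --- and then observes that such a $g$ passes only through unbounded cells of $\mathcal{A}$, so any point of $X$ on the wrong side lies in an unbounded cell that $g$ traverses and $g$ can be locally perturbed around it, one point at a time, without creating or destroying any crossings. Your bookkeeping of the finite set $P$ (vertices of $\mathcal{A}$ together with $X$) matches the paper's and is fine, but it is the easy part; to salvage your approach you would have to either cite a sweeping result as the paper does, or first prove that the arrangement can be replaced by a homeomorphic copy that is tame near infinity, and only then push a line past a bounded disk.
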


\begin{proof}
Suppose that $X$ is a set of points in the plane, let $\mathcal{A}$ be an arrangement of pseudolines and let $A$ be a set of points induced by $\mathcal{A}$. As a direct result of the sweeping lemma (\cite{FelsnerWeil01}, Lemma~1; \cite{Snoeyink89}, Theorem~3.1) we can add a pseudoline $g$ such that all points in $A$ are on the one side of $g$, see Fig.~\ref{fig:extending}a. What we still need to show is that $g$ can be modified so that all points in $X$ are on the same side as points in $A$. 
Let $x$ be some point in $X$ that is on the opposite side of $g$ and is closer to $g$ than any other such point. As all points in $A$ are on the same side of $g$, we get that $g$ can pass through unbounded cells of $\mathcal{A}$ only. Hence $x$ must be contained in an unbounded cell; additionally it must be one of the cells that $g$ passes through. Then we can locally perturb $g$ to go over the point $x$, see Fig.~\ref{fig:extending}b.

\begin{figure}
	\centering
	\includegraphics[]{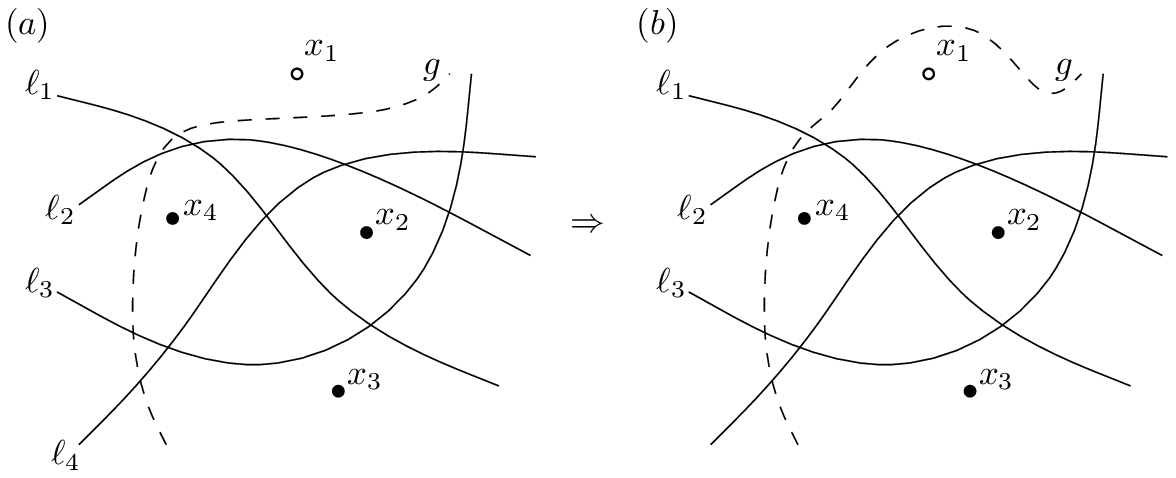}
	\caption[Extending an arrangement of pseudolines]{
		(a) Extending an arrangement of pseudolines $\mathcal{A} = \{\ell_1, \ell_2, \ell_3, \ell_4\}$ with a pseudoline $g$ such that all points in $\mathcal{A}$ are on the one side of $g$. Point $x_1$ that does not belong to $\mathcal{A}$ is on the opposite side of $g$.
		(b) Modifying $g$ so that all points in $X = \{x_1,x_2,x_3,x_4\}$ are on the same side of $g$ as points in $\mathcal{A}$
	}
	\label{fig:extending}
\end{figure}

\end{proof}

\begin{lemma}\label{lemma:2}
	Let $\mathcal{S}$ be a split system on $X$. If $\mathcal{S}$ is flat then $\mathcal{S}$ can be represented using a planar split network.
\end{lemma}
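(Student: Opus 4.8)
The plan is to build the required planar split network as the \emph{zonotopal tiling} dual to the pseudoline arrangement, realised geometrically via the Bohne--Dress theorem. Combinatorially the network is forced: the cells, edges and vertices of an arrangement $\mathcal{A}$ inducing $\mathcal{S}$ form a polyhedral complex whose $1$-skeleton (vertices $=$ cells of $\mathcal{A}$, edges joining cells that share a segment, grouped into classes by the pseudoline crossed) is a partial cube, and the class crossing $\ell_i$ induces exactly the split of $X$ that $\ell_i$ does. The difficulty, as noted above, is purely geometric: producing a straight-line drawing in which every class is drawn as a family of parallel segments of one common length, with convex internal cells. This is precisely the data of a zonotopal tiling, and Bohne--Dress guarantees one exists.

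First I would use Lemma~\ref{lemma:g} to adjoin a line $g$ with all vertices of $\mathcal{A}$ and all points of $X$ on one side of it. Designating $g$ as a line at infinity makes $\mathcal{A}$ an affine arrangement and exhibits it, as a rank-$3$ oriented matroid, as a single-element lifting (with lifting element $g$) of the rank-$2$ oriented matroid $M_0$ recording the cyclic order of the pseudoline directions $v_1,\dots,v_n$. Applying the Bohne--Dress theorem \citep{Bohne92}, in the formulation of \cite{RichterZiegler94}, to this lifting yields a zonotopal tiling $\mathcal{Z}$ of the zonogon $Z=\sum_i[0,v_i]$ in the plane. Under the standard duality the tiles of $\mathcal{Z}$ correspond to the vertices (crossing points) of $\mathcal{A}$, the edges of $\mathcal{Z}$ to the edges of $\mathcal{A}$, and the vertices of $\mathcal{Z}$ to the cells of $\mathcal{A}$; so the $1$-skeleton of $\mathcal{Z}$ is exactly the partial cube above, now drawn with straight edges.

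With the drawing in hand I would verify the three conditions in the definition of a planar split network. Every edge of $\mathcal{Z}$ lying over the pseudoline $\ell_i$ is a translate of the generator $v_i$, so the edges of a single class are parallel and of equal length $\lvert v_i\rvert$; choosing the $v_i$ to have pairwise distinct directions (a harmless perturbation of $M_0$) makes each tile a centrally symmetric, strictly convex zonogon, giving strictly convex internal cells meeting only along shared edges. The boundary of $Z$ contains exactly two edges that are translates of each generator $v_i$, and since after adjoining $g$ every $\ell_i$ meets the line at infinity, these boundary edges realise each class on the external face. Finally I would transfer the labels: each $x\in X$ sits in a cell of $\mathcal{A}$, hence at a vertex of $\mathcal{Z}$, and labelling that vertex by $x$ makes the split induced by class $i$ agree with the split of $X$ cut off by $\ell_i$, namely the corresponding $S_i\in\mathcal{S}$. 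Thus $\mathcal{Z}$ is a planar split network inducing all of $\mathcal{S}$.

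The main obstacle is the geometric realisation in the middle paragraph: the existence of a \emph{straight-line} planar drawing meeting the parallel-and-equal-length constraint does not follow from the partial-cube structure alone, and is exactly where Bohne--Dress is indispensable (and where working in rank $3$, so that every zonotopal tiling of a zonogon is genuinely realisable, matters). A secondary point needing care is the interplay between the projective nature of pseudoline arrangements and the affine zonogon: the line $g$ from Lemma~\ref{lemma:g} is what pins down the line at infinity so that every labelled cell appears as an honest vertex of the bounded tiling and every class reaches the external face.
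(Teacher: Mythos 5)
Your proposal is correct and takes essentially the same route as the paper: both adjoin the auxiliary pseudoline $g$ (via the sweeping argument of Lemma~\ref{lemma:g}), regard the resulting rank-$3$ oriented matroid as a single-element lifting of the rank-$2$ contraction by $g$ realized by vectors $\vec{v_1},\dots,\vec{v_n}$ ordered along $g$, and obtain the straight-line drawing as the dual zonotopal tiling supplied by the Bohne--Dress theorem in the Richter-Gebert--Ziegler formulation. The paper merely makes the tiling explicit, assigning Minkowski-sum coordinates $Z_Y$ to the covectors $Y$ with $(Y,+)\in\widehat{\mathcal{L}}$, which is exactly the tiling you invoke abstractly, with the same identification of vertices, edge classes, cells, labels, and boundary edges.
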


\begin{proof}
Suppose that $X$ is a set of points in the plane and let $\mathcal{A}$ be an arrangement of pseudolines which induces the collection of splits $\mathcal{S}$. Let $g$ be an auxiliary pseudoline that we add to $\mathcal{A}$ using Lemma~\ref{lemma:g}. Orient $g$ towards the points in $\mathcal{A} \cup X$. Let $p_+$ be some point in the open cell of $\mathcal{A}$ where $g$ goes to infinity and orient all pseudolines in $\mathcal{A}$ towards $p_+$. We obtain a loop-free, acyclic, rank 3 oriented matroid $\mathcal{M} = \mathcal{M}(\mathcal{A} \cup g)$ with the type I representation as described above. Elements of $\mathcal{M}$ correspond to the splits in $\mathcal{S}$ with $g$ representing an improper split.
Let $\widehat{\mathcal{L}}$ be the set of covectors of the resulting oriented matroid.
\begin{figure}
	\centering
	\includegraphics[]{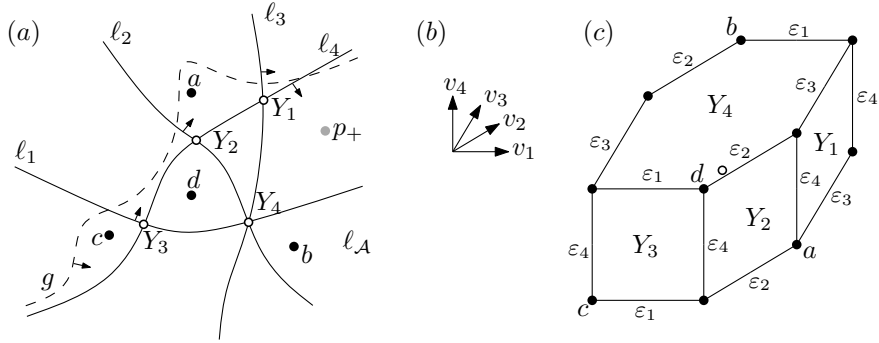}
	\caption[From flat splits to planar split network]{
		(a) An arrangement of pseudolines $\mathcal{A} = \{\ell_1, \ell_2, \ell_3, \ell_4\}$ and a set of points $\{a, b, c, d\}$. Pseudoline $g$ (dashed) is added to the $\mathcal{A}$ using Lemma~\ref{lemma:g} and oriented so that all points in $\mathcal{A}$ and $X$ are on the positive side of $g$. Pseudolines in $\mathcal{A}$ are oriented towards the dummy point $p_+$ and indexed in the order in which they intersect $g$. 
		(b) Elements in $\mathcal{A}$ are assigned vectors $\{v_1, v_2, v_3, v_4\}$ with increasing slopes. 
		(c) A zonotope of $\mathcal{A}$ as described above. White dot indicates the origin}
	\label{fig:proof2}
\end{figure}
Let $\ell_1,\ldots,\ell_n$ be an ordering of the lines in $\mathcal{A}$ given by their points of intersection with $g$, ties broken arbitrarily. Select $n$ vectors $\vec{v_1},\vec{v_2},\ldots,\vec{v_n}$ in $\Re^2$ with increasing slopes as shown in Fig.~\ref{fig:proof2}b. These constitute a realization of the contraction
\[\widehat{\mathcal{L}} / g = \Big\{Y\in \{0,+,-\}^X:(Y,0) \in \widehat{\mathcal{L}}\Big\}.  \]

A \emph{zonotope} is a polytope which is also a projection of a regular cube \citep[p. 51]{matroidsbook}. The zonotope $\mathcal{Z}(\mathcal{M})$ of an oriented matroid $\mathcal{M} = (E, \mathcal{C}^*)$ is a Minkowski sum of vectors $v \in \Re^2$ associated with elements in $E$ \citep[def. 1.1]{RichterZiegler94}:
\[
Z(\mathcal{M}) = \sum_{i = 1}^{n}[-v_i,+v_i].
\]
Zonotopes are associated with sign vectors $Y \in \mathcal{L}$ by 
\[
Z_Y = \sum_{i \in Y^0} [-v_i,+v_i] + \sum_{i \in Y^+} v_i - \sum_{i \in Y^-} v_i,
\]
see \cite{RichterZiegler94}.

Let $\mathcal{L} = \widehat{\mathcal{L}} \setminus g$ be the set of covectors after the deletion of $g$.
The split network for $\mathcal{L}$ is now constructed as follows (see Fig.~\ref{fig:proof2}c):
\begin{itemize}
	\item The vertices correspond to topes $Y$ of $\mathcal{L}$ such that $(Y,+) \in \widehat{\mathcal{L}}$, with coordinates
	\[Z_Y = \sum_{i \in Y^+} \vec{v_i} - \sum_{i \in Y^-} \vec{v_i}.\]
	Each point in $X$ is mapped to the vertex corresponding to the tope of the arrangement it is contained in.
	\item The edges correspond to covectors $Y$ of $\mathcal{L}$ such that $(Y,+) \in \widehat{\mathcal{L}}$ and $Y^0$ contains a single element. These correspond to the line segment given by the Minkowski sum 
	\[
	Z_Y = \sum_{i \in Y^+} \vec{v_i} - \sum_{i \in Y^-} \vec{v_i} + \sum_{i \in Y^0} [-\vec{v_i},\vec{v_i}].
	\]
	The zero element of these vectors $Y$ corresponds to a pseudoline of $\mathcal{A}$, and hence a split of $\mathcal{S}$. Edges corresponding to the same split induced by the pseudoline $\ell_i$ have the same direction and length as they are assigned the same vector $\vec{v_i}$.
	\item The cells correspond to cocircuits $Y$ of $\mathcal{L}$ such that $(Y,+) \in \widehat{\mathcal{L}}$ and $|Y^0| > 1$. The position of each cell is given by the same Minkowski sum
	\[
	Z_Y = \sum_{i \in Y^+} \vec{v_i} - \sum_{i \in Y^-} \vec{v_i} + \sum_{i \in Y^0} [-\vec{v_i},\vec{v_i}].
	\]
\end{itemize}

This graph is the affine projection of a partial cube \citep{FukudaHanda93}. It forms a tiling, and hence planar embedding by Theorem 4.2 of \cite{Bohne92} (see also Theorem 2.1 of \cite{RichterZiegler94}). 

\end{proof}

\cite{Eppstein05} provides a different presentation of similar ideas when proving that the region graph of an arrangement of pseudolines has a  face-symmetric planar drawing, though key steps of the proof were omitted. A relationship between marked arrangements of pseudolines and marked zonotopal tilings was also proven by \cite{FelsnerWeil01}. They proved this connection via a bijection between marked arrangements of pseudolines and allowable sequences and a bijection between allowable sequences and marked zonotopal tilings.

\begin{lemma}\label{lemma:3}
	If $\mathcal{S}$ has a planar split network representation then $\mathcal{S}$ is flat.
\end{lemma}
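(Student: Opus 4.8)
The plan is to invert the construction of Lemma~\ref{lemma:2}, building a pseudoline arrangement directly from the planar drawing. Suppose $\mathcal{S}$ has a planar split network representation: a straight-line drawing of a partial cube $G$ whose edge classes $\varepsilon_1,\dots,\varepsilon_m$ include all the classes inducing the splits of $\mathcal{S}$. First I would place, for each $x\in X$, a point labelled $x$ at the location of the vertex $\varphi(x)$. Since the curves constructed below cross edges only at their interior points and never pass through a vertex, no point of $X$ will lie on any of them, as the definition of a flat split system requires.

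For each class $\varepsilon_i$ I would construct a dual curve $\ell_i$ that crosses exactly the edges of $\varepsilon_i$. The enabling structural fact is that every bounded cell is a centrally symmetric convex polygon: the edges of a class are parallel, a strictly convex polygon has at most two edges in any one direction, and removing a class is a cut (so it meets the boundary cycle of a cell an even number of times), forcing the number of $\varepsilon_i$-edges on a cell to be $0$ or $2$ and these two edges to be parallel, equal, and antipodal. Joining the midpoints of the two edges of $\varepsilon_i$ by a segment inside each cell that contains them, and linking these segments across shared edges, yields a connected, non-self-intersecting curve crossing precisely the edges of $\varepsilon_i$. Because the external face contains at least one edge of each class, both ends of this curve reach the unbounded face, where I extend them to infinity; the result is homeomorphic to a line, hence a pseudoline.

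Next I would check that $\{\ell_i\}$ forms (after completion) a pseudoline arrangement. Two curves $\ell_i$ and $\ell_j$ can meet only inside a cell carrying edges of both classes, and there the two dual segments cross once. The point to establish is that two distinct classes share at most one such cell --- equivalently their cuts interleave at most once --- so $\ell_i$ and $\ell_j$ cross at most once. This produces a weak arrangement, which I complete to a full arrangement by extending the pseudolines near infinity without crossing any point of $X$, leaving every induced split unchanged. Finally, removing all edges of $\varepsilon_i$ disconnects $G$ into the two components carrying the two sides of $S_i=A_i|B_i$; since $\ell_i$ separates exactly these components, it induces $S_i$. Restricting to the classes inducing the splits of $\mathcal{S}$ then gives an arrangement inducing exactly $\mathcal{S}$, so $\mathcal{S}$ is flat.

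I expect the main obstacle to be the crossing claim of the third step: proving that two edge classes of the drawn partial cube interleave at most once, so that their dual curves meet at most once. This is precisely the property separating genuine pseudoline arrangements from arbitrary families of curves, and it is where the convexity of the cells and the partial-cube structure must be used essentially; the remaining steps are routine once it is in hand. An alternative would be to note that the drawing is a zonotopal tiling and to invoke the Bohne--Dress theorem in the reverse direction to extract a rank-$3$ oriented matroid encoding $\mathcal{S}$, after which Lemma~\ref{lemma:1a} applies.
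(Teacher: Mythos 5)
Your construction coincides with the paper's own proof for the first two steps: internal faces carry zero or two edges of each class (the paper gets this from Lemma~2.3 of Klavzar--Mulder plus convexity, your cut-parity argument is an acceptable substitute), and the midpoint graph of each class is a path ending on the external face, which extends to a pseudoline. But the heart of the lemma --- the claim you yourself flag as ``the main obstacle,'' that two classes share at most one internal face so that $\ell_i$ and $\ell_j$ cross at most once --- is left unproven in your proposal, and deferring it means the proof is not complete: this is exactly the step where the hypothesis of a \emph{planar, strictly convex, parallel-edge} drawing must be used. The paper closes this gap with a short geometric argument you should reproduce or replace: apply a rotation and shear so that edges of $\varepsilon_i$ are horizontal and edges of $\varepsilon_j$ are vertical; since the two $\varepsilon_i$-edges of any internal face lie on opposite sides of that face, the curve $\ell_i$ is monotone in the vertical coordinate (it meets every horizontal line exactly once) and likewise $\ell_j$ is monotone in the horizontal coordinate; hence every crossing of $\ell_i$ by $\ell_j$ goes from left to right, and there can be at most one. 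Identifying the right statement is not the same as proving it, and nothing in your sketch substitutes for this monotonicity argument.

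Your final step is also under-argued. Completing the weak arrangement ``by extending the pseudolines near infinity'' is not routine: deforming two disjoint pseudolines so that they cross can force new intersections with third pseudolines, producing a pair that crosses twice and destroying the arrangement property. The paper handles this with an inductive argument you would need in some form: enclose all intersection points in a simple closed curve $C$; if two disjoint pseudolines $\ell_i,\ell_j$ have adjacent exit points on $C$, they can be made to cross by a local modification affecting no other pseudoline; otherwise some $\ell_k$ exits $C$ between them, and since $\ell_k$ cannot intersect both $\ell_i$ and $\ell_j$ (its two exit points would have to interleave both pairs), one recurses on a pair with strictly fewer exit points between them. Finally, your proposed shortcut via Bohne--Dress in reverse does not immediately apply: a planar split network drawing need not be a zonotopal tiling of a zonotope (the external face need not be a centrally symmetric convex polygon), so the theorem cannot simply be invoked backwards without additional work.
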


\begin{proof}
The first step of the proof is to show that if an internal face has edge $e$ on its boundary then it has exactly one more boundary edge in the same class. Let $C$ be the boundary of the internal face and let $e \in C$. From  Lemma 2.3 of \cite{Klavzar02}, $C$ contains at least one other edge in the same edge class as $e$, and since all the edges in this class are parallel and non-adjacent there can be at most two on the boundary of any convex region. Hence $C$ contains zero or two edges from each edge class and any two edges from the same class will be on opposite sides of the cycle.

The second step is to use this observation to construct a collection of pseudolines from the network. 
For each edge class $\mathcal{E}_i$, construct a graph $G_i$ consisting of the midpoints $\{v_e:e \in \mathcal{E}_i\}$ with edges between midpoints which lie on the same internal face (Fig. \ref{fig:netToSplits}b). There are at most two vertices in this graph with degree less than two; these correspond to the two edges in $\mathcal{E}_i$ lying on the external face.  Furthermore $G_i$ is connected, since otherwise removing the edges in $\mathcal{E}_i$ from the network would partition it into more than two components. It follows that $G_i$ is a single path, terminating in midpoints on the external face of the network.

We construct a pseudoline $\ell_i$ by taking the path determined by $G_i$ and extending the line to infinity in both directions in such a way that there are no new intersection points within  the external face of the network. By construction, the lines that we get in this way induce exactly those splits represented by the network, see Fig. \ref{fig:netToSplits}c.

\begin{figure}
	\centering
	\includegraphics[]{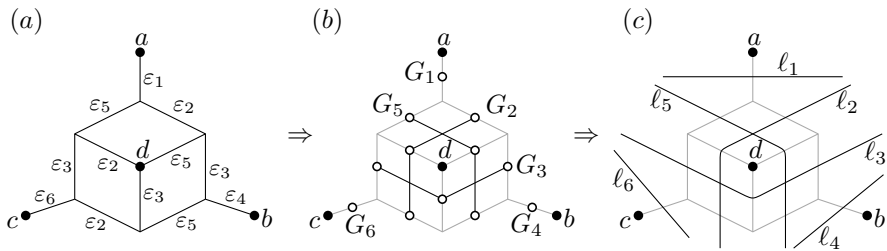}
	\caption[From planar split network to flat splits]{
		(a) A planar split network $\mathcal{N}$ on a set of taxa $\{a,b,c,d\}$ and a set of splits $\mathcal{S} = \{1,2,3,4,5,6\}$, 
		(b) a set of $G$ graphs of $\mathcal{N}$, and 
		(c) splits of $\mathcal{N}$ in the plane}
	\label{fig:netToSplits}
\end{figure}

The third step of the proof is to show that any two pseudolines in this collection intersect at most once. By the construction, the pseudolines cross every time they intersect.

Two lines $\ell_i$ and $\ell_j$ intersect if and only if there is an internal face with boundary containing edges from both $\mathcal{E}_i$ and $\mathcal{E}_j$, and when they intersect they cross. 	Suppose then that there are two internal faces $F_1,F_2$ with boundaries containing edges from both $\mathcal{E}_i$ and $\mathcal{E}_j$. Using a rotation and a shear transform we can assume that edges in $\mathcal{E}_i$ are horizontal while edges in $\mathcal{E}_j$ are vertical. 

If two internal faces intersect along a horizontal edge $e$ then one face needs to be above $e$ while the other face has to be below. It follows that $\ell_i$ is either monotonically increasing or  decreasing in the vertical coordinate. We assume that $\ell_i$ is monotonically increasing. Likewise, $\ell_j$ can be assumed to be  monotonically increasing in the horizontal coordinate.

The line $\ell_i$ partitions the plane vertically, in that it intersects every horizontal line exactly once. The pseudoline $\ell_j$ is monotonically increasing in the horizontal coordinate, so every time it crosses $\ell_i$ it crosses from left to right. It follows that $\ell_j$ crosses $\ell_i$ at most once, see Fig.~\ref{fig:intersections}. We have obtained a contradiction, and conclude that there is no more than one face with edges from both classes $\varepsilon_i$ and  $\varepsilon_j$ in a single planar split network.

\begin{figure}
	\centering
	\includegraphics{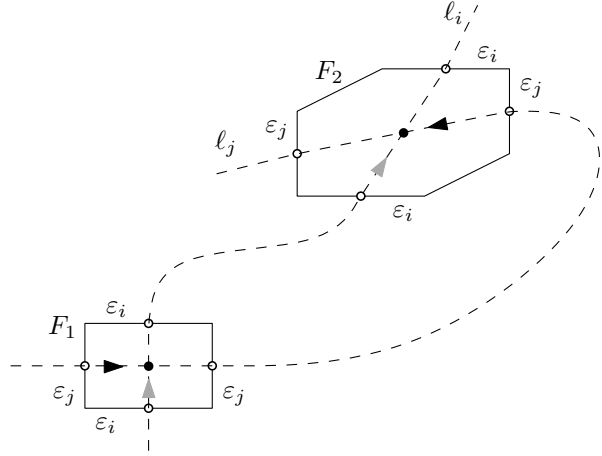}
	\caption[Sketch for the proof of Lemma~\ref{lemma:3} ]{
		Pseudoline $\ell_i$ is monotonically increasing in the vertical direction and $\ell_j$ in monotonically increasing in horizontal direction. Thus whenever $\ell_j$ crosses $\ell_i$, it crosses it from left to right ($F_1$ and $F_2$). If we want both pseudolines to cross more than once then second intersection must be from right to left what violates the horizontal monotonicity}
	\label{fig:intersections}
\end{figure}

The final step of the proof is to show that we can modify the given collection to an arrangement of pseudolines where every pair crosses \emph{exactly} once. We claim that if there is at least one pair of pseudolines in the collection which do not intersect then we can find a non-intersecting pair which can be modified to intersect in a way which affects no other pseudolines. 

Let $C$ be a simple closed  curve which contains all intersection points from the collection within its interior. 
Let $\ell_i$ and $\ell_j$ be two pseudolines which do not intersect, and let $v_i,v_i',v_j,v_j'$ be points of intersection between $\ell_i, \ell_j$ and $C$, labeled so that they appeared in the order $v_i,v_j,v_j',v_i'$ around the curve, see Fig. \ref{fig:twolines}a. 

If $v_i$ and $v_j$ are adjacent intersection points on the curve, then we can modify both $\ell_i$ and $\ell_j$ to add a point of intersection without affecting any other pseudolines in the collection, as in Fig.~\ref{fig:twolines}b. Otherwise, there is a line $\ell_k$ which intersects $C$ at some point $v_k$ between $v_i$ and $v_j$. If $\ell_k$ intersected both $\ell_i$ and $\ell_j$ then it would intersect $C$ between $v_i$ and $v_i'$ and between $v_j$ and $v_j'$, see Fig.~\ref{fig:twolines}c, a contradiction. Without loss of generality, suppose that $\ell_k$ does not intersect $\ell_i$ as shown in Fig.~\ref{fig:twolines}d. We can then repeat the argument with $\ell_i$ and $\ell_k$, noting that the number of intersection points on the curve between $v_i$ and $v_k$ is strictly less than that between $v_i$ and $v_j$. In this way we eventually obtain two non-intersecting lines with adjacent intersection points on $C$. 

\begin{figure}
	\centering
	\includegraphics[]{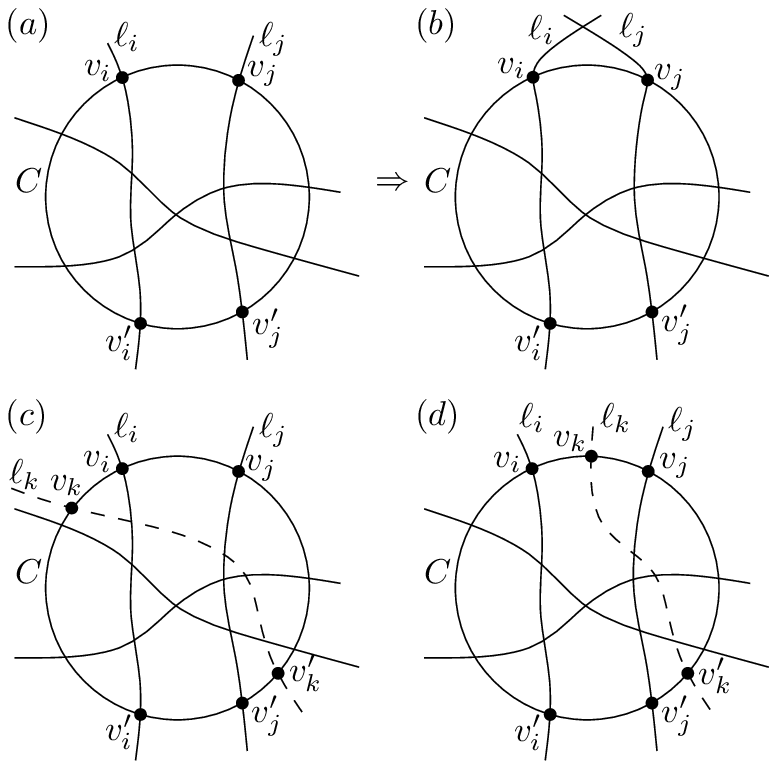}
	\caption[Arrangement of pseudolines with two non intersecting pseudolines]{
		A weak arrangement of pseudolines $\mathcal{A}$ with a closed curve $C$ bounding all intersection points. 
		(a) Two pseudolines $\ell_i$ and $\ell_j$ do not intersect. 
		(b) $\mathcal{A}$ can be modified so that $\ell_i$ and $\ell_j$ intersect without affecting any of the other pseudolines in $\mathcal{A}\setminus\{\ell_i, \ell_j\}$.
		(c) Pseudoline $\ell_k$ intersects both $\ell_i$ and $\ell_j$, thus it intersects $C$ on the different sides of $\ell_i$ and $\ell_j$. 
		(d) $\ell_k$ intersects $C$ in between $\ell_i$ and $\ell_j$, thus it cannot intersect both $\ell_i$ and $\ell_j$}
	\label{fig:twolines}
\end{figure}

\end{proof}

\section{Maximal split systems}

The equivalence between flat split systems and splits from oriented matroids allow us to easily prove properties of flat split systems which are difficult to establish directly. Here we consider properties of {\em full}  (maximal) collections of flat splits.

\begin{theorem} \label{thm:max}
Let $\mathcal{S}$ be a flat split system on a set $X$ with $n=|X|$. Then there is a flat split system $\mathcal{S}'$ on $X$ such that $\mathcal{S} \subseteq \mathcal{S}'$ and $|\mathcal{S}'| = \binom{n}{2}$.
\end{theorem}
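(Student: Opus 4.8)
The plan is to pass to the oriented matroid side, where ``maximal'' becomes ``uniform'', and to exploit the fact that a uniform rank 3 oriented matroid on $n$ elements encodes exactly $\binom{n}{2}$ proper splits. By Theorem~\ref{thm:main} (specifically Lemma~\ref{lemma:1b}), the flat split system $\mathcal{S}$ is encoded by a loop-free, acyclic, rank 3 oriented matroid $\mathcal{M}$ with element set $X$, so that each $A|B \in \mathcal{S}$ is realised by a tope $T$ of $\mathcal{M}$ with $T^+ = A$ and $T^- = B$. Since $\mathcal{M}$ is loop-free, every tope has full support and so is determined by the ordered pair $(T^+, T^-)$; hence distinct proper tope-pairs $\{T, -T\}$ give distinct splits, and counting splits reduces to counting topes. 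The strategy is to enlarge $\mathcal{M}$ to a \emph{uniform} loop-free, acyclic, rank 3 oriented matroid $\mathcal{M}'$ on the same ground set $X$ whose tope set contains that of $\mathcal{M}$, and then to let $\mathcal{S}'$ be the split system encoded by $\mathcal{M}'$. By Theorem~\ref{thm:main}, $\mathcal{S}'$ is again flat, and $\mathscr{T}(\mathcal{M}) \subseteq \mathscr{T}(\mathcal{M}')$ gives $\mathcal{S} \subseteq \mathcal{S}'$.

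For the enlargement I would work in the type I representation of $\mathcal{M}$: an arrangement $\mathcal{A}$ of $n = |X|$ oriented pseudolines in the projective plane, all pointing towards the open cell corresponding to $T_+$. The oriented matroid is uniform precisely when $\mathcal{A}$ is \emph{simple}, i.e.\ no three pseudolines are concurrent. Wherever three or more pseudolines meet at a common point, I would perturb them slightly to break the concurrency, replacing the multiple point by a small triangle of pairwise crossings. Each such perturbation preserves every cell of the original arrangement (it only carves out additional small cells near the former multiple points) and preserves the distinguished positive cell, so the resulting simple arrangement yields a uniform, loop-free, still-acyclic rank 3 oriented matroid $\mathcal{M}'$ on $X$ with $\mathscr{T}(\mathcal{M}) \subseteq \mathscr{T}(\mathcal{M}')$. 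This preservation of cells under the specialisation (a rank-preserving weak map) is the delicate point and the step most in need of a careful argument; in rank 3 it is transparent thanks to the topological representation theorem, but it is what the proof really turns on.

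It then remains to verify that a simple arrangement of $n$ pseudolines in the projective plane encodes exactly $\binom{n}{2}$ proper splits. Lifting to the double cover $S^2$ gives an arrangement of $n$ great pseudocircles in which each pair meets in two antipodal points, so $V = n(n-1)$ and, since each pseudocircle is cut into $2(n-1)$ arcs, $E = 2n(n-1)$; Euler's formula $V - E + F = 2$ then gives $F = 2 + 2\binom{n}{2}$ regions, i.e.\ topes. These come in antipodal pairs $\{T, -T\}$, one per split, giving $1 + \binom{n}{2}$ pairs; removing the single improper pair $\{T_+, T_-\}$ (present because $\mathcal{M}'$ is acyclic, and unique because loop-free topes have full support) leaves exactly $\binom{n}{2}$ proper splits. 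Hence $|\mathcal{S}'| = \binom{n}{2}$ with $\mathcal{S} \subseteq \mathcal{S}'$, as required. The small cases $n \le 3$ can be checked directly and agree with the worked example of the three splits $a|bc$, $b|ac$, $c|ab$.
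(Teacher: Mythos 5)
Your proposal is correct and takes essentially the same route as the paper's proof: pass to the type I pseudoline representation of the encoding rank 3 oriented matroid, perturb away the multiple points to obtain a uniform (simple) arrangement whose tope set contains the original one---including the delicate cell-preservation point, which the paper likewise asserts without detailed proof---and read off exactly $\binom{n}{2}$ proper splits. The only cosmetic differences are that you verify the tope count of a simple arrangement directly via Euler's formula where the paper cites Zaslavsky's face-count formulas, and you perturb all concurrences at once rather than detecting non-uniformity and iterating one perturbation at a time.
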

\begin{proof}
Suppose that $|\mathcal{S}| < \binom{n}{2}$. Since $\mathcal{S}$ is flat, there is a rank 3, acyclic, loop-free oriented matroid $\mathcal{M}$ with tope set
\[\mathcal{T} = \left\{ T:T(A) = + \mbox{ and } T(B) = - \mbox{ for some $A|B \in \mathcal{S}$} \right\} \cup \{T_-,T_+\}.\]
Then 
\begin{eqnarray*}
|\mathcal{T}| &=& 2|\mathcal{S}|+2 \\
&<& 2 \binom{n-1}{2} + 2 \binom{n-1}{1} + 2.
\end{eqnarray*}
Hence $\mathcal{M}$ is not uniform \citep{Zaslavsky75, matroidsbook}. In any TYPE I representation of $\mathcal{M}$ there are pseudo-lines corresponding to three elements $a,b,c$ which all meet at a common vertex $v$. Perturbing one of these pseudo-lines gives a representation for an oriented matroid with all the topes in $\mathcal{T}$ and some additional topes. Repeating the process gives a rank 3 oriented matroid $\mathcal{M}'$ which is uniform and contains all of the topes in $\mathcal{T}$. We now let $\mathcal{S}'$ be the flat split system corresponding to $\mathcal{M}'$.

\end{proof}

We say that flat systems with $\binom{n}{2}$ are {\em full}. Theorem~\ref{thm:max} shows that every flat system can be extended to a full flat system. It turns out that these systems have a particularly simple characterization, a consequence of the equivalence with uniform matroids. Given a split system $\mathcal{S}$ on $X$ and $Y \subset X$ we define the induced split system 
\[\mathcal{S}_{|Y} = \left\{(A \cap Y) | (B \cap Y) \mbox{ such that $A \cap Y \neq \emptyset$ and $B \cap Y \neq \emptyset$ } \right\}.\]

\begin{theorem} \label{thm:fourPt}
Let $\mathcal{S}$ be a set of splits of $X$ with $|\mathcal{S}| = \binom{n}{2}$. Then $\mathcal{S}$ is flat (and hence full) if and only if for all $Y \subseteq X$ with $|Y| = 4$ the induced split system $\mathcal{S}_{|Y}$ contains exactly $6$ splits.
\end{theorem}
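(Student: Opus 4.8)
The plan is to prove the two implications separately, using Theorem~\ref{thm:main} to move freely between flat split systems and loop-free, acyclic, rank~3 oriented matroids, and using the tope count in the proof of Theorem~\ref{thm:max} to identify \emph{full} systems (those with $\binom{n}{2}$ splits) with \emph{uniform} such matroids. Throughout I will use the fact, recorded in that count, that the maximum number of topes of a loop-free acyclic rank~3 oriented matroid on $n$ elements is $2\binom{n-1}{2}+2\binom{n-1}{1}+2 = 2\binom{n}{2}+2$, attained exactly by the uniform matroids, so that a flat split system with $\binom{n}{2}$ splits corresponds precisely to a uniform rank~3 oriented matroid.

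For the forward implication, suppose $\mathcal{S}$ is flat with $|\mathcal{S}|=\binom{n}{2}$, so by the above it is encoded by a uniform rank~3 oriented matroid $\mathcal{M}$ on $X$. Fix $Y\subseteq X$ with $|Y|=4$. I would first check that $\mathcal{S}_{|Y}$ is exactly the split system encoded by the deletion $\mathcal{M}\setminus(X\setminus Y)$: a pseudoline separating the points of $X$ induces the split $(A\cap Y)|(B\cap Y)$ of $Y$, which is proper precisely when the pseudoline still separates $Y$. Since the deletion of a uniform rank~3 oriented matroid is again uniform of rank~3 (any three of the four surviving elements remain independent), the deletion is a $U_{3,4}$, whose topes form exactly $6$ antipodal split-pairs. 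Hence $|\mathcal{S}_{|Y}|=6$.

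For the converse the first, purely combinatorial, step is the base observation that \emph{any} split system on a $4$-set with exactly $6$ proper splits is flat: there are only $7$ proper splits of a $4$-set (four of shape $1|3$ and three of shape $2|2$), and omitting any single one leaves either the one-point-inside configuration (a $1|3$ split missing, as in the example of Section~\ref{sec:om}) or the convex-position configuration (a $2|2$ split missing), both realizable by four straight lines, i.e. by a $U_{3,4}$. Thus the hypothesis says precisely that every $4$-element restriction of $\mathcal{S}$ is a full flat split system. To assemble these local matroids into a single uniform rank~3 oriented matroid encoding $\mathcal{S}$, I would induct on $n$, proving the slightly stronger statement that the four-point condition \emph{alone} already forces $\mathcal{S}$ to be full flat; this sidesteps having to select a deletion that preserves the exact cardinality. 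The base case $n=4$ is the observation above. For the step, every $4$-subset of $X\setminus\{x\}$ is a $4$-subset of $X$ and restriction is transitive, so $\mathcal{S}_{|X\setminus x}$ inherits the four-point condition and, by induction, is full flat; I then fix a pseudoline arrangement with points $X\setminus\{x\}$ realizing it and re-insert $x$ into a cell so that every split of $\mathcal{S}$ involving $x$ is recovered. The four-point conditions pairing $x$ with each triple from $X\setminus\{x\}$ prescribe, locally, the side of each pseudoline on which $x$ must lie, and this prescription is exactly the data of a one-element extension of the corresponding oriented matroid, to be realized by Levi's enlargement lemma (as already used in the proof of Lemma~\ref{lemma:1b}). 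Finally, by Theorem~\ref{thm:main} the resulting uniform matroid witnesses that $\mathcal{S}$ is flat.

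The main obstacle is this converse assembly, and it concentrates in two non-routine points. First, one must show that the locally prescribed position of $x$ relative to every triple extends to a \emph{globally consistent} one-element extension; equivalently, that the enlarged set of sign vectors still satisfies the tope axiom~(T2). This is the genuine local-to-global (stretchability-flavoured) content of the theorem, and it is where the cardinality/uniformity hypothesis must be used essentially, since without it local flatness of all $4$-subsets need not propagate. Second, one must confirm that the re-insertion yields $\binom{n}{2}$ splits rather than fewer, i.e. that the extended arrangement is again simple (uniform); I expect this to follow from the four-point condition forcing no three elements to become dependent, but verifying it carefully, together with the (T2) check, is the delicate part of the argument.
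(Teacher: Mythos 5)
Your forward implication is essentially the paper's own argument: identify a full flat system with a uniform, acyclic, loop-free rank~3 oriented matroid via the maximal tope count, delete down to the four-element set $Y$, and observe that the deletion is again uniform of rank~3, so it has $14$ topes and hence induces exactly $6$ splits. That half is correct. Your converse, however, has a genuine gap, and you have located it yourself: the ``globally consistent one-element extension'' is not a delicate verification to be carried out later --- it is the entire content of the theorem, and your proposal contains no argument for it. Prescribing, for the new element $x$, a side of every pseudoline of an arrangement realizing $\mathcal{S}_{|X\setminus\{x\}}$ is exactly prescribing a localization for a single-element extension of the corresponding oriented matroid, and such sign prescriptions are in general \emph{not} realizable; the obstructions are precisely the local-to-global, stretchability-type phenomena you allude to. Levi's enlargement lemma cannot close this: it inserts a new \emph{pseudoline} through two prescribed points of an existing arrangement, and says nothing about inserting a new \emph{point} whose position relative to all existing pseudolines is prescribed. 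The induction step also does not have the right arrangement to extend: a realization of $\mathcal{S}_{|X\setminus\{x\}}$ has at most $\binom{n-1}{2}$ pseudolines while $\mathcal{S}$ has $\binom{n}{2}$, since under restriction the trivial split $x|X\setminus\{x\}$ dies and the remaining excess consists of pairs of splits of $\mathcal{S}$ differing only in the placement of $x$, which collapse to a single pseudoline. So besides placing $x$ in a suitable cell you must duplicate roughly $n-2$ pseudolines into two copies threading $x$ between them and add a further pseudoline cutting $x$ off --- each of these is another unaddressed consistency problem. Finally, your ``slightly stronger statement'' that the four-point condition alone forces $\mathcal{S}$ to be full is itself a substantial unproven claim, not a harmless strengthening; the paper keeps the hypothesis $|\mathcal{S}|=\binom{n}{2}$ and uses it essentially.

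The paper avoids all of this by a single citation: it forms the antipodal set of sign vectors $\mathcal{T}$ from $\mathcal{S}$ together with $T_+,T_-$, notes $|\mathcal{T}|=2\binom{n-1}{2}+2\binom{n-1}{1}+2$ and that the four-point hypothesis makes every four-element restriction of $\mathcal{T}$ have exactly $14$ elements (VC dimension at most $3$), and then invokes the characterization of \citep{GartnerWelzl97} (their Theorem~50), which says precisely that such a family is the tope set of a uniform oriented matroid; rank~$3$, acyclicity and loop-freeness follow from the definition and cardinality of $\mathcal{T}$, and the main theorem then gives flatness. That G\"artner--Welzl theorem is exactly the local-to-global result your induction would, in effect, have to reprove from scratch; without it, or an argument of comparable strength for the extension step, your converse does not go through.
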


\begin{proof}
       Let $\mathcal{S}$ be a full flat split system on $X$. Let $\mathcal{M}$ be the corresponding acyclic, loop-free, rank 3 oriented matroid. Consider any $Y \subseteq X$ such that $|Y| = 4$, and let $\mathcal{M}_Y$ denote the oriented matroid on element set $Y$ obtained by deleting elements of $\mathcal{M}$ not in $Y$, and let $\mathcal{T}_Y$ be its set of topes. By considering a TYPE I representation of $\mathcal{M}$ we see that $\mathcal{M}_Y$ is rank $3$, acyclic, loop-free and uniform. The splits induced by $\mathcal{M}_Y$ are exactly the splits in $\mathcal{S}|_Y$.  Since $|\mathcal{T}_Y| = 14$ we have that $\mathcal{S}_{|Y}$ contains exactly six splits.

For the converse, we make use of the characterization of uniform oriented matroids in terms of VC dimension due to \cite{GartnerWelzl97}. Let $\mathcal{S}$ be a collection of splits such that $|\mathcal{S}| = \binom{n}{2}$ and $\left|\mathcal{S}_{|Y}\right| = 6$ for all four element subsets $Y \subseteq X$. Let $\mathcal{T}$ be the collection of signed vectors
\[\mathcal{T} = \left\{ T:T(A) = + \mbox{ and } T(B) = - \mbox{ for some $A|B \in \mathcal{S}$} \right\} ]\cup \{T_-,T_+\}.\]
Note that $|\mathcal{T}| = 2 \binom{n-1}{2} + 2 \binom{n-1}{1} + 2$ and $\mathcal{T} = -\mathcal{T}$. 

For any $Y$ such that $|Y| = 4$ we have that $\left| \mathcal{S}_Y \right| = 6$. Hence $\mathcal{T}$ restricted to $Y$ contains $14$ elements. In the terminology of  \citep{GartnerWelzl97}, $\mathcal{T}$ has VC dimension at most $3$, so by Theorem~50 in \citep{GartnerWelzl97} $\mathcal{T}$ is the set of topes of a uniform oriented matroid. By the definition of $\mathcal{T}$ and its cardinality, we have that this oriented matroid is rank $3$, acyclic and loop free. 

\end{proof}

A direct consequence of Theorems~\ref{thm:max} and \ref{thm:fourPt} is that if $\mathcal{S}$ is flat but not necessarily full then we will still have that $|\mathcal{S}_{|Y}| \leq 6$ for all four point subsets $Y \subseteq X$. Network $N_4$ in Fig.~\ref{fig:typesOfNetworks} depicts the splits $ab|cdef$, $abef|cd$, $abc|def$ and $acf|bde$, as well as all splits separating one element from the remainder. In this example, if $Y = \{b,c,d,f\}$ then $\mathcal{S}_{|Y}$ contains seven splits, so there exists no planar split network representing these splits. 

We note, however, that this four point condition is not a {\em sufficient} condition for a set of splits to be flat. Consider, for example, the set $\mathcal{S}$ containing splits $ab|cde$, $bc|ade$, $cd|abe$ and $ad|bce$ as well as all `trivial' splits separating one element from the remainder. The presence of these trivial splits means that any planar split network for the splits will necessarily be outer-planar, in which case if $\mathcal{S}$ was flat it would need to be circular, which it is not \citep{DressHuson04}. 

Indeed there appears to be no simple check if a collection of splits is flat or not, when it is not full. \cite{Spillner12} conjecture that the recognition of flat split systems is NP-hard, and we see no reason not to suspect this.

\subsection*{Acknowledgements}

This research was supported by an Otago doctoral scholarship to Balvo\v{c}i\={u}t\.{e}, by a Marsden grant (14-UOO-251) to Bryant, and by funds from the Allan Wilson Centre. We thank Andreas Dress and Vince Moulton for valuable discussions relating to this work.

\bibliographystyle{plainnat}

\end{document}